

\documentclass[11pt]{article}
\usepackage[affil-it]{authblk}
\usepackage{graphicx}
\usepackage{hyperref}
\usepackage{natbib}
\usepackage[utf8]{inputenc}
\usepackage{latexsym,amsmath,amssymb, amsthm}

\newtheorem{definition}{Definition}[section]
\newtheorem{theorem}[definition]{Theorem}
\newtheorem{lemma}[definition]{Lemma}
\newtheorem{corollary}[definition]{Corollary}

\newcommand{\defeq}{\mathrel{\mathop:}=}

\begin{document}

\title{Analysis of the Sybil defense of Duniter-based cryptocurrencies
}

\author{Lucas Isenmann}

\affil{LIRMM, Université de Montpellier, 161 rue Ada, 34095, Montpellier, France}

\date{}

\maketitle

\begin{abstract}
Duniter-based cryptocurrencies, which are providing a kind of universal basic income, are using a system called "Web of Trust" based on a social network whose evolution is subject to graph theoretical rules, time constraints and a licence in order to avoid large Sybil attacks.
We investigate in this article the largest size of a Sybil attack that a simplified version of the graph theoretical rules of a Web of Trust can undergo depending on the number of attackers and on the parameters of the system.
We show that even if in theory, without considering social and time constraints, this system cannot in general prevent huge attacks, in the real-world case of a Duniter-based cryptocurrency (with thousands of users), the system can prevent attacks of large size with only graph theoretical rules.

Keywords: Social network, Sybil attack, Cryptocurrency, Graph theory

\end{abstract}

\section{Introduction}

    Cryptocurrencies providing different kind of universal basic income have appeared recently: Duniter \cite{Duniter}, Good Dollar~\cite{GoodDollar}, Circles~\cite{Circles}, PopCoin~\cite{borge2017proof} and Encointer~\cite{brenzikofer2019encointer}.
    All these cryptocurrencies face the same problem where badly intentioned users may create fake users in order to receive several sources of money (in addition of their own source of money).
    In the case of reputation system such attacks are called Sybil attacks \cite{douceur2002sybil}.
    As it deals with a crucial power (the creation of money), it is of very high importance that each account created on such cryptocurrencies corresponds to exactly one physical person.
    This problem is also faced by e-voting systems.
    To solve this problem of identification, the Proof of Personhood has been for example developped for PopCoin~\cite{borge2017proof} and Encointer~\cite{brenzikofer2019encointer} where users enter the system by taking part at a physical gathering where they each receive an identifier (a proof-of-personhood token).
    
    Duniter is a general blockchain for managing cryptocurrencies providing a kind of universal basic income.
    Each cryptocurrency created through Duniter has its own Sybil defense, with its own parameters, called "Web of Trust" in order to limit the creation of fake accounts.
    This system is only based on graph theoretical rules, time constraints and on a licence.
    The underlying directed graph of the Web of Trust is such that vertices correspond to users and arcs correspond to what is called "certifications".
    If there is an arc from $A$ to $B$, we say that $A$ has "certified" $B$.
    According to the licence, a user can certify another user only if they know each other "enough" (as defined by the licence).
    Therefore this system can be seen as a social network where there occurs a propagation of trust.
    The first Duniter-based cryptocurrency is called Ğ1 and its Web of Trust was already studied as a social network \cite{gensollen2020you}.
 
    We will now focus our study only on the graph theoretical part of this system in order to determine in what order the system is able to prevent Sybils attack with only its graph theoretical rules.
    This system therefore generates a complex network representing the propagation of trust among users of the cryptocurrency and whose evolution is restricted to a fixed set of rules.

\subsection{The Web of Trust of a Duniter-based cryptocurrency}

For a comprehensive description of the system see \cite{Duniter} which is a protocol consisting of around one hundred of rules mixing technical rules of the blockchain and of the Web of Trust.
Because of the complexity of the protocol, we summarize the rules of this system as follows.
The Web of Trust of a Duniter-based cryptocurrency is a directed graph $G$ where the vertices $V(G)$ correspond to users and the arcs $A(G)$ correspond to certifications that users give to other users.
The system has the following parameters: three positive integers $\delta, \Delta, d$ and a real number $\alpha \in [0,1]$.
In the remaining of the article, these parameters are fixed.

\begin{definition}
    A vertex $v$ is said to be \emph{referent} if $d^+(v) \geq n^{-1/d} $ and $d^-(v) \geq n^{-1/d}$.
    We denote by $r(G)$ the total number of referent vertices in $G$.
    For any integer $i$ and any set $X$ of vertices, we denote by $r_i(X)$  the number of referents vertices at distance at most $i$ to $X$.
    Formally, $r_i(X) = | \{ v \in V(G) \colon d(v,X) \leq i \} |$.
\end{definition}

Referents vertices will play the role of "central" vertices to which other vertices should be "not too far".
At any time, the graph should satisfy the following rules:

\begin{definition}
    The digraph $G$ is said to be \emph{valid} if it satisfies the following rules, for every vertex $v$ of $V(G)$:
    \begin{itemize}
        \item (in-degree rule) $d^-(v) \geq \delta$,
        \item (out-degree rule) $d^+(v) \leq \Delta$,
        \item (distance rule) $r_{d}(v) \geq \alpha r(G)$.
    \end{itemize}

\end{definition}

The Web of Trust is initialized with any valid digraph.
After the initialization, only two operations can be triggered by users: \emph{vertex creations} and \emph{arc creations}.
Only certain group of users can trigger a vertex creation.
Such groups are called \emph{introducers}:


\begin{definition}[Introducer group]
    Let $X$ be a subset of vertices.
    We say that $X$ is an \emph{introducer group} if it satisfies the following rules:
\begin{itemize}
    \item (in-degree rule) $|X| \geq \delta$
    \item (out-degree rule) for any vertex $v$ of $X$, $d^+(v) < \Delta$
    \item (distance rule) $r_{d-1}(X) \geq \alpha r(G)$.
\end{itemize}
\end{definition}

At any moment, an introducer group of users can effectively create a vertex according to the following operation.
This operation represent the fact that a new user joins the Web of Trust and that the previous group is certifying this new user.

\begin{definition}[Vertex creation]
Given an introducer group $X$ of vertices, the operation on $G$ consisting in adding a vertex $v$ and arcs from every vertex of $X$ to $v$ is called a \emph{vertex creation}.
\end{definition}

Remark that after a vertex creation from an introducer group $X$ of vertices, the in-degree and out-degree rules are satisfied by every vertices of the new graph.
Nevertheless, the referent status of old vertices may change: as the number of vertices has increased by $1$, the condition to be referent is harder to be satisfied, thus some vertices can loose their referent status.
Furthermore, vertices among the introducer group $X$ may become referents as they have given one more certification.
Therefore the distance rule is not necessarily satisfied by the new graph.

The second operation is described as follows.
At any moment, a user $v$, which is of out-degree at most $\Delta-1$ can add an arc, or \emph{give a certification}, from $v$ to any other user (which already exists in the Web of Trust).

\begin{definition}[Arc creation]
    Given a vertex $v$ such that $d^+(v) < \Delta$ and any vertex $x$, the operation consisting in adding an arc from $v$ to $x$ is called an \emph{arc creation}.
\end{definition}

Remark that after an arc creation, every vertex different from the two vertices of the arc does not change its referent status (as the number of vertices has not changed).
But, the two vertices of the arc can become referent if they were not.
Therefore, even if the in-degree and out-degree rules are still satisfied in the new graph, it is not necessarily the case for the distance rule.
Thus the new graph is not necessarily valid.

\paragraph{Differences between the protocol and this theoretical study.}

In the Web of Trust protocol, if after a vertex creation or an arc creation, the new graph is not valid because a vertex does not satisfy anymore the distance rule, then this vertex is automatically removed from the Web of Trust and recursively until the graph is valid.
At the difference with the Web of Trust protocol, the previous elimination process is not taken in account because the theoretical study of Sybil attacks would be too complicated.

Furthermore, the Web of Trust protocol is using time constraints in order to slow down the evolution of possible Sybil attacks.
For example, it is only possible to trigger a vertex creation or an arc creation every five days.
In this study, we do not take these time constraints in account and therefore we suppose that the operations occur immediately.

\section{Maximum size of a Sybil attack}

\subsection{Sybil attacks}

A Sybil attack is the creation of fake identities controlled by some members which will be called attackers.
The goal is to limit the size of the attack and to slower it.
In the rest of the article, $A$ will denote a set of vertices of size at least $\delta$.

\begin{definition}[Sybil attack]
A \emph{Sybil attack} from $A$ is a sequence of vertex creations and arc creations using vertices from $A$ and from the created vertices.
The vertices of $A$ are called the \emph{attackers} and the vertices created during the sequence are called the \emph{fake vertices}.
The number of fake vertices is called the \emph{size} of the Sybil attack.
\end{definition}

The natural question that arises is: how big can be a Sybil attack?
In order to answer this question, we will use the following notions.
Given a Sybil attack from $A$, we denote by $F_i$ the set of fake vertices at distance $i$ from $A$ and $F_0 \defeq A$.
Furthermore we consider the following quantity which is linked to the out-degree rule:
\begin{definition}
For any subset $X$ of vertices of $G$, we denote by $c(X)$ the total number of available certifications of $X$.
Formally it is defined as follows: $c(X) \defeq |X| \cdot \Delta - \sum_{x \in X} d^+(x)$.
\end{definition}

Finally, remark the analogy with the degree-diameter problem which has been solved by the Moore theorem.
This theorem asserts that a digraph whose maximum distance between two vertices (called the diameter) and whose maximum out-degree are bounded cannot be arbitrary large.

\begin{theorem}[Moore \cite{Moore}]
Let G be a digraph such that $d$ is the diameter of the graph and such that $\Delta$ is the maximum out-degree.
Then $|G| \leq \Delta^d$.
\end{theorem}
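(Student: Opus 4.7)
The natural plan is the standard breadth-first argument for degree-diameter bounds. Fix any vertex $v \in V(G)$ and, for each $i \geq 0$, let $L_i = \{u \in V(G) : d(v,u) = i\}$ collect the vertices at directed distance exactly $i$ from $v$. Because the diameter is $d$, every vertex is reachable from $v$ in at most $d$ steps, so $V(G)$ is the disjoint union $L_0 \cup L_1 \cup \cdots \cup L_d$.

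Next I would carry out the inductive step bounding $|L_i|$ by $\Delta^i$. Since $|L_0| = 1$, and every vertex of $L_i$ lies on a shortest directed $v$-to-$u$ path and is therefore an out-neighbour of some vertex of $L_{i-1}$, the out-degree hypothesis gives $|L_i| \leq \Delta \cdot |L_{i-1}|$ and hence $|L_i| \leq \Delta^i$ by induction on $i$. Summing over levels yields
\begin{equation*}
|V(G)| \;=\; \sum_{i=0}^{d} |L_i| \;\leq\; \sum_{i=0}^{d} \Delta^i.
\end{equation*}

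The main obstacle is closing the gap between this classical estimate and the sharper form $|V(G)| \leq \Delta^d$ stated in the theorem: the geometric sum strictly exceeds $\Delta^d$, so no algebraic rearrangement of the BFS count alone can reach the target. The most plausible reconciliation is a convention on the meaning of ``diameter $d$''. If this phrase is read so that every shortest path contains at most $d$ vertices (equivalently at most $d-1$ arcs), then the disjoint union above ranges only over $0 \leq i \leq d-1$ and the estimate becomes $\sum_{i=0}^{d-1} \Delta^i = (\Delta^d - 1)/(\Delta - 1)$, which is bounded by $\Delta^d$ whenever $\Delta \geq 2$, exactly matching the statement. I would therefore present the BFS count as above and flag this convention explicitly; under the standard reading of diameter the conclusion should be weakened to $|V(G)| \leq 1+\Delta+\cdots+\Delta^d$ or to an $O(\Delta^d)$ bound, but since the theorem is invoked only as an analogy for the degree-diameter problem either resolution is sufficient for the paper's use of it.
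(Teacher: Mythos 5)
Your BFS level-count is the standard and correct proof of the directed Moore bound, and there is nothing in the paper to compare it against: the paper cites the theorem from the literature and gives no proof at all, using it only as motivating analogy for Theorem~\ref{theorem:upper_bound}. Your argument up to $|V(G)| \leq \sum_{i=0}^{d} \Delta^i$ is exactly the classical one, and your flag of the discrepancy is a genuine catch rather than a gap in your proof: the inequality $|G| \leq \Delta^d$ as printed in the paper is in fact false under the usual reading of diameter. A concrete counterexample is the complete digraph on $\Delta+1$ vertices, which has diameter $1$, maximum out-degree $\Delta$, and $\Delta+1 > \Delta^1$ vertices; for $\Delta = 1$ a directed cycle violates it even more drastically. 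So no proof of the literal statement exists, and the correct conclusion is the classical bound $1 + \Delta + \cdots + \Delta^d$, or your convention-adjusted variant $\sum_{i=0}^{d-1}\Delta^i \leq \Delta^d$ for $\Delta \geq 2$ if ``diameter $d$'' is read as counting vertices on a shortest path. Your closing judgment is also right on the merits: since the paper invokes the theorem purely as an analogy (bounded out-degree plus a reachability constraint forces bounded size), the corrected bound $\sum_{i=0}^{d} \Delta^i = O(\Delta^d)$ serves the paper's purpose identically, and the cleanest fix would be to restate the theorem with the geometric-sum bound rather than to contort the definition of diameter.
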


Here, we investigate the case where the out-degree is bounded by a constant and the number of "far in-neighbors" (more exactly referents vertices which are at distance at least $d$ to a vertex) is bounded by a linear function of the size of the graph.
As in the case of the Moore theorem, we show that these conditions bound the size of the attack.

\subsection{Upper bounds}


We now prove a general upper bound on the number of fake vertices at distance at most $d$ that can be created in a Sybil attack from $A$.

\begin{theorem}
\label{theorem:upper_bound}
For any $k \geq 1$, there are at most $g_k \defeq c(A) \frac{(\Delta-\delta+1)^k-1}{\Delta-\delta}$ fake vertices at distance at most $k$ from $A$.
\end{theorem}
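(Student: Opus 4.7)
My plan is an induction on $k$, exploiting the recursion $g_k = (\Delta - \delta + 1)\, g_{k-1} + c(A)$ which follows immediately from the closed form of $g_k$.

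For the base case $k=1$, every fake $v \in F_1$ has at least one in-neighbor in $A$ (since $d(v,A) = 1$), so $|F_1|$ is at most the total number of arcs that $A$ emits during the attack, and this is at most $c(A)$ by definition. Thus $|F_1| \leq c(A) = g_1$.

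For the inductive step with $k \geq 2$, write $T_j := \sum_{i=1}^{j} |F_i|$ and assume $T_{k-1} \leq g_{k-1}$. By the recursion, it suffices to establish $|F_k| \leq c(A) + (\Delta - \delta)\, T_{k-1}$. The argument is a double count on $T_{k-1}$: on one side, the in-degree rule gives $\delta\, T_{k-1} \leq (\text{in-arcs into } T_{k-1})$, and these split into arcs from $A$ (which can only reach fakes in $F_1 \subseteq T_{k-1}$, contributing at most $c(A)$), arcs within $T_{k-1}$ (call this $C$), and ``backward'' arcs from vertices at distance $\geq k$ into $T_{k-1}$; on the other side, the out-degree rule bounds the total out-degree of $T_{k-1}$ by $T_{k-1}\,\Delta$, and decomposes it into $C$, out-arcs going back to $A$ or to legitimate users (all nonnegative), and forward arcs from $F_{k-1}$ to $F_k$, which number $e_{k-1,k}$. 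Eliminating $C$ between the two inequalities gives $e_{k-1,k} \leq c(A) + (\Delta-\delta)\, T_{k-1}$, and then $|F_k| \leq e_{k-1,k}$ because each vertex of $F_k$ has at least one in-neighbor in $F_{k-1}$.

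The step I expect to require care is the backward-arc term from $F_{\geq k}$ to $T_{k-1}$, which would otherwise leave an extra positive contribution in the in-arc count. I will remove this obstruction by a reduction: delete every arc whose tail has distance $\geq k$ and whose head has distance $< k$. Such an arc cannot lie on any shortest path from $A$, because a path through it would have length $\geq d(\text{tail}) + 1 \geq k+1$, strictly more than the distance of the head; hence all distances from $A$ and all sets $F_i$ are preserved. Moreover the in-degree rule is preserved at each affected head $v$, since the introducer group used when $v$ was created already contributed at least $\delta$ in-arcs that the deletion does not touch. The resulting configuration is therefore a valid Sybil attack with the same $T_k$ and $|F_k|$ but no backward arcs from $F_{\geq k}$ into $T_{k-1}$, so the counting argument above applies without the unwanted term. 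Combining $|F_k| \leq c(A) + (\Delta-\delta)\, T_{k-1}$ with $T_k = T_{k-1} + |F_k|$ yields $T_k \leq (\Delta-\delta+1)\, T_{k-1} + c(A) \leq (\Delta-\delta+1)\, g_{k-1} + c(A) = g_k$, completing the induction.
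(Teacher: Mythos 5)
Your overall strategy (induction on $k$ plus a double count of the arcs crossing the boundary between $T_{k-1}$ and $F_k$) is genuinely different from the paper's, which runs a single amortized potential-function argument over the whole sequence of operations, tracking $f=\sum_i c(F_i)\sum_{j=0}^{d-i}(\Delta-\delta+1)^j+\sum_i |F_i|$ with \emph{decreasing} weights on the $c(F_i)$ and showing every operation can only decrease $f$. Unfortunately your version has a genuine gap, located exactly where you flagged it: the reduction that deletes backward arcs.

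The deletion does preserve distances, but it does not preserve the in-degree count you need. A vertex $v$ at final distance $j\leq k-1$ may have been created with an introducer group containing a vertex whose final distance is $\geq k$; the arc from that introducer to $v$ is one of the arcs your rule deletes, so after deletion $v$ can have fewer than $\delta$ in-neighbours in $A\cup T_{k-1}$. Concretely, take $\delta=2$ and $k=2$: create $v_1,v_2$ from two attackers, create $w$ from $\{v_1,v_2\}$ (so $w\in F_2$), then create $v_3$ from $\{a,w\}$ with $a\in A$ (so $v_3\in F_1$). The arc $w\to v_3$ is deleted and $v_3$ is left with in-degree $1<\delta$, so the inequality $\delta T_1\leq(\text{in-arcs into }T_1\text{ from }A\cup T_1)$ reads $6\leq 5$ and fails. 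The deeper issue is that backward arcs let the attacker pay for in-degrees of $T_{k-1}$ with out-degree of $F_{\geq k}$, a resource your induction hypothesis does not track; a level-by-level induction carrying only $T_{k-1}\leq g_{k-1}$ cannot see this transfer, and indeed your intermediate inequality $e_{k-1,k}\leq c(A)+(\Delta-\delta)T_{k-1}$ can itself fail (with $\delta=2$, $\Delta=3$, $c(A)=6$ one can realize $T_1=5$ with one backward arc and $e_{1,2}=12>11$). This is precisely what the paper's decreasing weights on the $c(F_i)$ are designed to absorb: a certification spent from a farther level costs less potential, so backward certifications never profit the attacker. To repair your argument you would need to strengthen the induction hypothesis to control the available certifications of every level simultaneously, which essentially reconstructs the paper's potential function.
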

\begin{proof}
We define $\alpha \defeq \Delta - \delta +1$.
We can see $c(F_i)$ and $|F_i|$ as variables that evolve during the Sybil attack.
We now define the following linear function on these variables:
\[
f \defeq \sum_{i=0}^d \left( c(F_i) \sum_{j=0}^{d-i} \alpha^j \right) + \sum_{i=1}^{d+1} |F_i|
\]
Note that the weights of the variables $|F_i|$ are $1$ and that those of the variables $c(F_i)$ are decreasing.
Let us show that in moment of the attack, $f \leq  c(A) \sum_{j=0}^d \alpha^j$.
At the beginning of the attack, we have $c(F_0) = c(A)$ and $|F_i| = c(F_i) = 0$ for every $i$.
Thus, at the beginning, we have $f = c(A) \sum_{i=0}^{d} \alpha^j$ and the previous inequality is satisfied.
Let us now check that the quantity $f$ decreases for any Sybil attack.

The creation of a vertex of $F_{i+1}$ must use at least one certification from a vertex of $F_i$ and possibly from vertices of $F_k$ for $k \geq i+1$ but not from vertices of $F_k$ for $k < i$.
Thus by creating a vertex of $F_{i+1}$ we note $p_k$ the number of certifications used from vertices of $F_k$ (satisfying therefore the equality $\sum_{k} p_k \geq \delta$).
We first treat the case where where $p_k = 0$ for every $k > i+1$.
As the creation of a vertex of $F_{i+1}$ increases the number of available certifications of $F_{i+1}$ by $\Delta$ we get the following table which sums up the difference on the variables:
\[
\begin{array}{cccc}
    c(F_i) & |F_{i+1}|  & c(F_{i+1}) & |F_{i+2}|  \\
    -p_i & +1  & - (\delta-p_i) + \Delta & +0
\end{array}
\]
In this case we define $p \defeq p_i$.
Let us check that the quantity $f$ decreases for such a creation.
As $\alpha = \Delta -\delta +1$, the difference for $f$ is 

\begin{align*}
   & -p \left( \sum_{j=0}^{d-i} \alpha^j \right) + 1 + (\Delta - \delta +p) \sum_{j=0}^{d-i-1} \alpha^j \\
    =& -(p-1) \left( \sum_{j=0}^{d-i} \alpha^j \right) - \left( \sum_{j=0}^{d-i} \alpha^j \right) + 1 \\
    & + (\Delta - \delta +1) \left(\sum_{j=0}^{d-i-1} \alpha^j \right) + (p-1) \left(\sum_{j=0}^{d-i-1} \alpha^j \right) \\
    = &  -(p-1) \left( \sum_{j=0}^{d-i} \alpha^j \right) + (p-1) \left(\sum_{j=0}^{d-i-1} \alpha^j \right) \leq 0
\end{align*}

The general case can be checked in the same manner as the weights are decreasing with $c(F_i)$.

We conclude that $f \leq c(A) \sum_{i=0}^d \alpha^j$ during the Sybil attack.
We deduce the announced inequality as $\alpha = \Delta -\delta +1$ and as the weights and the variables are non-negative.
\end{proof}

We now show that is bound is kind of tight if $\delta < \Delta$ by describing an explicit Sybil attack strategy.
The idea is to use as less as possible the attackers.
We start by describing how this attack creates the vertices of $F_1$.

\begin{lemma}
If $c(A) \geq \binom{\delta}{2}$, then we can create $c(A) - \binom{\delta}{2}$ fake vertices at distance $1$ from $A$ such that these fake vertices have  $(c(A) - \binom{\delta}{2}) \cdot (\Delta-(\delta-1)) + \binom{\delta}{2}$ available certifications in total.
\end{lemma}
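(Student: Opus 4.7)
The plan is to exhibit an explicit Sybil attack that realises the stated numbers and then verify both counts by direct bookkeeping. The guiding principle is the one announced just before the lemma: each new fake should draw on $A$ as little as possible, using previously created fakes for the rest of its incoming $\delta$ certifications.

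Concretely, I would create fakes $f_1,\ldots,f_n$ in order, with $n=c(A)-\binom{\delta}{2}$. For $1\le k\le \delta$, the introducer group of $f_k$ is taken to consist of $\delta-k+1$ attackers (chosen greedily among those still having spare out-degree) together with $f_1,\ldots,f_{k-1}$; in particular $f_1$ is certified by $\delta$ attackers. For $k>\delta$, the introducer group of $f_k$ is taken to consist of a single attacker together with the $\delta-1$ most recent fakes $f_{k-\delta+1},\ldots,f_{k-1}$. Each group has exactly $\delta$ vertices, so the size requirement on the introducer group is immediate. Moreover each fake $f_j$ appears as an introducer only for $f_{j+1},\ldots,f_{j+\delta-1}$, so no fake ever accumulates more than $\delta-1\le \Delta$ outgoing arcs during the construction, keeping the out-degree rule for introducer group members satisfied throughout.

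The two counts then fall out by summation. The total number of attacker certifications consumed is
\[
\sum_{k=1}^\delta (\delta-k+1) + (n-\delta) \;=\; \binom{\delta}{2}+n,
\]
so setting this equal to $c(A)$ forces $n=c(A)-\binom{\delta}{2}$, as announced. Similarly the total number of fake-to-fake arcs produced equals the total fake-to-fake in-degree
\[
\sum_{k=1}^\delta (k-1) + (n-\delta)(\delta-1) \;=\; \binom{\delta}{2} + (n-\delta)(\delta-1),
\]
so the out-degree capacity still available on the fakes is $n\Delta$ minus this quantity, which rearranges to $n(\Delta-(\delta-1))+\binom{\delta}{2}$, exactly the claimed expression after substituting $n=c(A)-\binom{\delta}{2}$.

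The main subtlety I expect, rather than the arithmetic, is confirming that the greedy selection of attackers never gets stuck: at each step the scheme asks for up to $\delta$ distinct attackers with positive remaining out-degree. Since $|A|\ge\delta$ and the plan exactly exhausts $c(A)$, this should follow from a standard Hall-type (or degree-sequence realisability) argument, for instance by always picking attackers of largest remaining capacity first so that no attacker is prematurely saturated while others still have room. Once that scheduling point is handled, the rest of the proof is the routine counting sketched above.
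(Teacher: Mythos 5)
Your construction is essentially the paper's: bootstrap the first fakes by drawing $\delta,\delta-1,\ldots$ certifications from $A$ and the remainder from earlier fakes, then spend one attacker certification per new fake together with the $\delta-1$ most recent fakes; your two counts reduce to the paper's after the same algebra. The scheduling point you flag (always finding $\delta$ distinct attackers with spare out-degree) is likewise left implicit in the paper, which just picks the attackers arbitrarily, so you are not missing anything relative to the published argument.
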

\begin{proof}
The construction consists in two steps.
The first step creates $\delta-1$ fake vertices by minimizing the use of certifications from $A$ as follows:
the first fake vertex will get certifications from $\delta$ attackers  (the choice of the $\delta$ attackers is arbitrary).
The second one will get certifications from $\delta-1$ of the attackers and from the first fake vertex, \textit{etc...}

\begin{figure}[ht]
    \centering
    \includegraphics{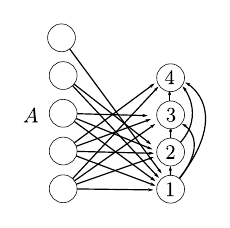}
    \caption{Step 1 of the Sybil optimal strategy (in the case where $\delta = 5$): creation of the first $\delta-1$ fake vertices. The order of creation of the fake vertices is noted on the vertices.}
\end{figure}

The second step consists in creating fake vertices with certifications from the most recent fake vertices and from one attacker (that can be chosen arbitrarily).

\begin{figure}[ht]
    \centering
    \includegraphics{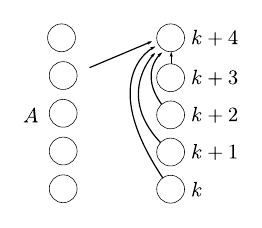}
    \caption{Step 2 of the Sybil optimal strategy: creation of a fake vertex with certifications from the $\delta-1$ most recent fake vertices and from one attacker. The order of creation of the fake vertices is noted on the right of the vertices.}
\end{figure}

Let us count how much fake vertices we can create using this strategy.
After the first step, attackers have used $(\delta-1) + (\delta-1) +  \cdots  + 1 = \delta-1 + \binom{\delta}{2}$ certifications over the $c(A)$ available.
Therefore there remains $c(A) - (\delta-1) - \binom{\delta}{2}$ certifications to give from attackers in total.
We can repeat the creation of step 2 as long as attackers have at least one certification available.
Thus we can create $c(A) - (\delta-1) - \binom{\delta}{2}$ fake vertices during step 2.
Plus the $\delta-1$ fake vertices created during step 1, we get a total of $ \delta-1 + c(A) - (\delta-1) - \binom{\delta}{2} = c(A) - \binom{\delta}{2}$ fake vertices at distance at most $1$ from $A$.

Furthermore, let us count the total number of available certifications for the fake vertices.
At the exception of the $\delta-1$ latest fake vertices  ones have given certifications to $\delta-1$ of the next fake vertices.
The $\delta-1$ latest fake vertices have given less certifications: the last one has given no certifications, the second latest one has given one certification, \textit{etc} ... 
Thus in total, there are $(c(A) - \binom{\delta}{2}) \cdot (\Delta-(\delta-1)) + \binom{\delta}{2}$ certifications available from these fake vertices.
\end{proof}

By repeating recursively $d$ times the previous Lemma, we can prove the following theorem:
\begin{theorem} \label{theorem:upper_bound_worst_case}
If $r_1(A) = r(G)$ and if $c(A) \geq \binom{\delta}{2}$, then there exists a Sybil attack of size $(c(A) - \binom{\delta}{2}) \frac{(\Delta-\delta+1)^{d}-1}{\Delta - \delta}$.
\end{theorem}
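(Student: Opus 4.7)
The plan is to iterate the preceding lemma $d$ times to construct the layers $F_1, F_2, \ldots, F_d$ of the Sybil attack. Writing $\gamma \defeq \Delta - \delta + 1$ and $B \defeq c(A) - \binom{\delta}{2}$ for readability, the first application of the lemma to the attacker set $A$ yields $|F_1| = B$ fake vertices at distance~$1$ from $A$, carrying in total $c(F_1) = B\gamma + \binom{\delta}{2}$ available certifications. Since $c(F_1) \geq \binom{\delta}{2}$, the lemma can be applied again with $F_1$ playing the role of the attacker set; this produces $|F_2| = c(F_1) - \binom{\delta}{2} = B\gamma$ new vertices (which land in $F_2$ because they receive their certifications from $F_1 \cup F_2$ only) with $c(F_2) = B\gamma^2 + \binom{\delta}{2}$ available certifications. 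A routine induction on $i$ then gives $|F_i| = B\gamma^{i-1}$ and $c(F_i) = B\gamma^i + \binom{\delta}{2}$ for every $1 \leq i \leq d$.

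Summing the resulting geometric series across the $d$ layers yields
\[
\sum_{i=1}^{d} |F_i| \;=\; B \sum_{i=0}^{d-1} \gamma^i \;=\; \bigl(c(A) - \tbinom{\delta}{2}\bigr)\cdot \frac{(\Delta-\delta+1)^{d} - 1}{\Delta - \delta},
\]
which is exactly the announced size of the Sybil attack.

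The hypothesis $r_1(A) = r(G)$ is precisely what keeps this recursion going, and checking that the distance rule for introducer groups is preserved at every iteration is the main technical point. Because every referent of $G$ has a directed arc into $A$, and because Step~$1$ of the lemma creates arcs from a fixed "core" of $A$ into $F_1$, each referent lies at directed distance at most $2$ from this core of $F_1$. Propagating this layer by layer, if one keeps the chosen core vertices fixed through successive applications of the lemma, every referent sits at directed distance at most $i+1$ from the core of $F_i$, so the introducer groups used at iteration $i$ still satisfy $r_{d-1}(X) \geq \alpha \, r(G)$. Making this backbone argument precise — in particular handling the attacker/fake vertex that is rotated in Step~$2$ of the lemma and arguing that the in-degree and out-degree rules on the chosen core vertices remain compatible with their repeated use across the $d$ iterations — is the only nontrivial obstacle beyond the arithmetic recurrence above.
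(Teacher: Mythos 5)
Your proposal is correct and follows exactly the paper's approach: the paper's entire justification is the single sentence ``By repeating recursively $d$ times the previous Lemma,'' and your recurrence $|F_i| = B\gamma^{i-1}$, $c(F_i) = B\gamma^{i} + \binom{\delta}{2}$ together with the geometric-series summation is precisely the intended computation. The distance-rule verification (and the implicit requirement $|F_i|\geq\delta$ needed to reapply the lemma) that you flag as the remaining technical point is not addressed in the paper at all, so your treatment is if anything more explicit than the original.
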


Therefore the upper bound of Theorem~\ref{theorem:upper_bound} is asymptotically optimal up to a multiplicative coefficient.
The difference between the upper bound and the latter strategy size in the multiplicative coefficient of $\binom{\delta}{2}$ is due to the fact that the $\delta$ first vertices of $F_i$ which are created need at least $\binom{\delta}{2}$ certifications from $F_{i-1}$.
This fact is not taken in account in the proof of the upper bound.
While the gain is not relevant, we conjecture that it is possible to improve the upper bound.

The upper bound of Theorem~\ref{theorem:upper_bound} can be, nearly,  achieved but only in a very extreme case: all the referents vertices must have certified all the attackers.
Therefore we prove the following theorem which gives an upper bound taking in account that referents are at different distance to the attackers.

\begin{theorem}
\label{theorem:upper_bound_with_referents}
For any $k \geq 0$, if $(r_{d-k}(A) + |A| + g_k) / (r(G) + |A| + g_k) < \alpha$, then the size of any Sybil attack from $A$ is at most $g_k$.
\end{theorem}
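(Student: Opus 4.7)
The plan is a contradiction argument. Suppose some Sybil attack from $A$ has strictly more than $g_k$ fake vertices. By Theorem~\ref{theorem:upper_bound}, at most $g_k$ fake vertices sit at distance $\leq k$ from $A$, so some fake vertex $v$ lies at distance strictly greater than $k$ from $A$ in the final graph. Since arc creations can only shrink distances, $v$ was already at distance $>k$ when it was created. I would pick the first such $v$ to appear and write $G'$ for the graph just before its creation.

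At this instant, every earlier fake vertex is at distance $\leq k$ from $A$ in $G'$, so their number is $\leq g_k$ by Theorem~\ref{theorem:upper_bound}. The introducer group $X$ used to create $v$ consists of attackers or fake vertices, and since $d_{G'}(A,v) = 1 + \min_{x \in X} d_{G'}(A,x) > k$, every $x \in X$ satisfies $d_{G'}(A,x) \geq k$. The distance rule for $X$ reads $r_{d-1}(X) \geq \alpha\, r(G')$ in $G'$, and the strategy is to show this inequality is violated.

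The heart of the proof is an upper bound on $r_{d-1}(X)$. Let $w$ be a referent of $G'$ with $d_{G'}(w, X) \leq d-1$. If $w$ is bad (attacker or fake), it lies in a set of size at most $|A| + g_k$. Otherwise $w$ is honest, and a shortest witnessing path $w \to \cdots \to x \in X$ in $G'$ must first enter the bad set at an attacker: a fake vertex is never the target of an arc from an honest vertex, since fake vertices did not yet exist when the honest arcs were formed and honest vertices do not take part in the attack. Call this attacker $a$; the prefix $w \to \cdots \to a$ lies entirely in $G$, and the suffix $a \to \cdots \to x$ has length at least $d_{G'}(a,x) \geq d_{G'}(A,x) \geq k$. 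Hence the prefix has length at most $d-1-k$, so $d_G(w,A) \leq d-k-1 \leq d-k$, meaning $w$ is one of the at most $r_{d-k}(A)$ such referents. Combining,
\[
r_{d-1}(X) \leq r_{d-k}(A) + |A| + g_k.
\]

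The conclusion then follows from the hypothesis $r_{d-k}(A) + |A| + g_k < \alpha(r(G) + |A| + g_k)$ together with the lower estimate $r(G') \geq r(G) + |A| + g_k$ (the referents of $G$ carry over to $G'$, and the bad vertices contribute additional referents), yielding $r_{d-1}(X) < \alpha\, r(G')$ and contradicting the distance rule. The most delicate point of the whole argument is this comparison of $r(G')$ with $r(G) + |A| + g_k$: as fake vertices are added, the referent threshold $n^{1/d}$ rises, so some honest referents of $G$ may fall below the new threshold in $G'$, while new arcs from bad vertices may push other honest vertices past it; one must argue that these effects net out appropriately, or else tighten the bound on honest referents in $G'$ near $X$ so as to absorb the discrepancy.
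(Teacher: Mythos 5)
Your overall strategy---contradiction via a first fake vertex created at distance $>k$ from $A$, plus the bound $r_{d-1}(X) \leq r_{d-k}(A) + |A| + g_k$ obtained by splitting a witnessing path at the first attacker it meets---is the same as the paper's, and your treatment of the path argument is in fact more careful than the published proof. The gap is in the last step. To pass from $r_{d-1}(X) \leq r_{d-k}(A) + |A| + g_k < \alpha\,(r(G) + |A| + g_k)$ to $r_{d-1}(X) < \alpha\, r(G')$ you invoke $r(G') \geq r(G) + |A| + g_k$, and this inequality is simply false in general: $|A| + g_k$ is only an \emph{upper} bound on the number of bad vertices present (the attack may have created far fewer than $g_k$ fake vertices by that point), and even the bad vertices that do exist need not be referents. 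You correctly identify this as the weak point, but you leave it unresolved, so the proof does not close as written.

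The repair is not to substitute the upper bound $|A|+g_k$ separately into the numerator and the denominator. Let $b$ be the \emph{actual} number of bad (attacker or fake) referents at the moment $X$ introduces $v$. Your path argument gives $r_{d-1}(X) \leq r_{d-k}(A) + b$, while the total referent count at that moment is $r(G) + b$ (granting, as the paper implicitly does, that honest vertices neither gain nor lose referent status during the attack). The distance rule then yields $\bigl(r_{d-k}(A) + b\bigr)/\bigl(r(G) + b\bigr) \geq \alpha$. Since $r_{d-k}(A) \leq r(G)$, the map $x \longmapsto \bigl(r_{d-k}(A)+x\bigr)/\bigl(r(G)+x\bigr)$ is nondecreasing, so replacing $b$ by its upper bound $|A| + g_k$ from Theorem~\ref{theorem:upper_bound} can only increase the ratio, giving $\bigl(r_{d-k}(A) + |A| + g_k\bigr)/\bigl(r(G) + |A| + g_k\bigr) \geq \alpha$ and contradicting the hypothesis. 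Your residual worry about honest vertices changing referent status under the growing threshold $n^{1/d}$ is legitimate---the paper glosses over it as well---but it is orthogonal to the substitution error, which is the step that actually breaks your argument.
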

\begin{proof}
Suppose that there exists a fake vertex $v$ at distance $k+1$ from $A$.
We note $rf_k$ the number of fake referent vertices at distance at most $k$ from $A$.
According to Theorem~\ref{theorem:upper_bound}, $rf_k \leq |A| + g_k$ for any $i \geq 0$.
As $v$ should satisfy the distance rule, then $( r_{d-k}(A) + rf_k  ) / (r(G) +  rf_k) \geq \alpha$.
Remark that the function $x \longmapsto x/(x+r(G))$ is increasing on $[0,+\infty[$.
Thus $( r_{d-k}(A) + |A| + g_k) / (r(G) + |A| + g_k) \geq \alpha$.
This contradicts the hypothesis of the theorem.
We conclude that the Sybil attack is of size at most $g_{k}$.
\end{proof}

Remark that if $r_{d-1}(A)/r(G) < \alpha$ then the size of any Sybil attack from $A$ is $0$ meaning that $A$ is not an introducer group.
This constraint applies also for honest vertices who want to create new vertex.
Therefore this distance rule slow down the social development of a Duniter Web of Trust.

By taking $k = d$ in Theorem~\ref{theorem:upper_bound_with_referents}, we obtain the following corollary.
We can see that the size of any Sybil attack is bounded in the case where the number of referent vertices is enough large.
\begin{corollary}
If $|R| > \frac{1}{\alpha} (|A| + g_d)$, then the size of any Sybil attack from $A$ is at most $g_d$.
\end{corollary}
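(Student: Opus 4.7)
The plan is to apply Theorem~\ref{theorem:upper_bound_with_referents} with $k=d$ and verify that the corollary's hypothesis implies the one required by that theorem. With $k=d$ the quantity $r_{d-k}(A)$ collapses to $r_0(A)$, which counts the referents of $G$ lying inside $A$ itself, so in particular $r_0(A)\le|A|$.

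The hypothesis of Theorem~\ref{theorem:upper_bound_with_referents} thus reads
\[
\frac{r_0(A) + |A| + g_d}{|R| + |A| + g_d} < \alpha,
\]
and using $r_0(A)\le|A|$ it suffices to verify this with $r_0(A)$ replaced by $|A|$. The corollary's assumption $|R|>\tfrac{1}{\alpha}(|A|+g_d)$ rearranges to $\alpha|R|>|A|+g_d$; adding $\alpha(|A|+g_d)$ to both sides gives $\alpha(|R|+|A|+g_d)>(1+\alpha)(|A|+g_d)$, which dominates $2|A|+g_d$ (and \emph{a fortiori} $r_0(A)+|A|+g_d$) as soon as $\alpha(|A|+g_d)\ge|A|$. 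This holds in the intended regime where $g_d$ is not small compared to $|A|$; in particular, when $A$ contains no referents we have $r_0(A)=0$ and only the weaker condition $(1-\alpha)(|A|+g_d)<\alpha|R|$ is needed, which follows immediately from the corollary's hypothesis.

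With the theorem's hypothesis established, its conclusion directly gives the desired upper bound of $g_d$ on the size of any Sybil attack from $A$, which is the claim of the corollary. I do not anticipate any real obstacle: the whole argument is a direct substitution followed by a single linear inequality, and the only delicate point is accommodating the $r_0(A)$ contribution, which vanishes in the natural case where the attackers are not themselves already referents.
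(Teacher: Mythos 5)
Your route is the same as the paper's: the paper obtains this corollary by simply setting $k=d$ in Theorem~\ref{theorem:upper_bound_with_referents}, and your substitution and algebra are correct as far as they go. The problem is that you have identified, but not closed, a genuine gap. Deducing the theorem's hypothesis $\frac{r_0(A)+|A|+g_d}{|R|+|A|+g_d} < \alpha$ from the corollary's hypothesis $|R| > \frac{1}{\alpha}(|A|+g_d)$ requires, as you compute, the extra inequality $\alpha(|A|+g_d) \ge r_0(A)$, and you only assert that this "holds in the intended regime" before concluding that you "do not anticipate any real obstacle" --- immediately after exhibiting the obstacle. The implication between the two hypotheses does fail in general: take $\alpha$ small, all attackers referent (so $r_0(A)=|A|$), and $g_d$ small compared to $|A|$. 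Concretely, with $\alpha = 0.01$, $|A| = r_0(A) = 1000$, $g_d = 3$ (achievable with $c(A)=1$, $\Delta-\delta=1$, $d=2$) and $|R| = 150000$, the corollary's hypothesis holds ($150000 > 100300$) while the theorem's hypothesis with $k=d$ requires $|R| > 199297$ and therefore fails. So the proposal does not prove the statement as written; it proves it only under an unstated side condition.

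In fairness, this gap is present in the paper itself, whose one-line derivation silently drops the $r_0(A)$ term; you have effectively rediscovered that the corollary does not follow from Theorem~\ref{theorem:upper_bound_with_referents} without a further assumption. A clean repair is either to strengthen the hypothesis to $|R| > \frac{1}{\alpha}(2|A|+g_d)$, which absorbs $r_0(A) \le |A|$ unconditionally via exactly the computation you already carried out, or to add the explicit standing assumption $\alpha(|A|+g_d) \ge |A|$ (comfortably true for the real-world parameters in the conclusion, where $g_d$ is astronomically larger than $|A|$ whenever $c(A)\ge 1$, and the case $c(A)=0$ is trivial since no vertex creation is then possible). Either way, the needed condition must appear as a hypothesis, not as an expectation.
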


Nevertheless this upper bound is rather huge.
This is due to the fact that this theorem is valid in particular in the worst case where all referent vertices have certified all the attackers.

\section{Conclusion}

The first real-world Duniter-based cryptocurrency is called Ğ1 and is using the following parameters: $d = 5$, $\delta = 5$, $\Delta = 100$ and $\alpha = 0.8$.
In april 2020, there were 2590 vertices in the associated Web of Trust.
With these parameters, Theorem~\ref{theorem:upper_bound_worst_case} gives the worst case upper bound of $\sim 2 \cdot 10^{10}$ fake vertices which is rather huge compared to the size of the graph.
Nevertheless this upper bound has been obtained in the worst case where all the referent vertices have certified all the attackers which is practice not the case.
According to Dunbar's study, one person is able to maintain stable relationships with only around 150 people \cite{dunbar1992neocortex}, thus making it impossible for a group of attackers to have been certified by all the referent vertices if certifications correspond to stable relationship (which should be the case according to the licence of the Web of Trust).

In order to study the maximum size of a Sybil attack in a real Web of Trust of a Duniter currency, we consider another Sybil strategy that we called "basic".
The theoretical advantage of this one is that the number of fake vertices that can be created can be efficiently computed in function of the numbers $r_i(A)$.

Suppose for convenience that $|A| = \delta$.
The basic strategy goes as follows.
Start by creating as most as possible fake vertices with all the attackers.
Then from every group of $\delta$ fake vertices, create $\Delta$ fake vertices.
Repeat until it is not possible (as the distance rule may not be satisfied).
See Figure~\ref{fig:sybil_basic} for a scheme of a basic Sybil attack.

\begin{figure}[ht]
    \centering
    \includegraphics[width=0.4\textwidth,page=1]{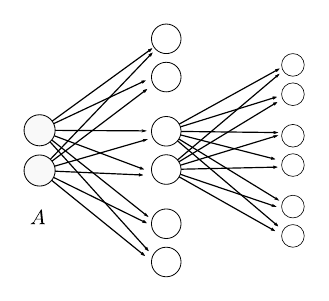}
    \caption{Example of a Sybil basic attack with $\delta = 2$, $\Delta = 6$ and $|A| = 2$.}
    \label{fig:sybil_basic}
\end{figure}

This strategy is not efficient at all, as fake vertices will struggle to satisfy the distance rule.
Indeed, as the structure is a directed tree, the only near referents for fake vertices will be their predecessors which is too much compared to the total number of referent fake vertices.

Nevertheless, we think that this strategy shows the difficulty for a Sybil attack to grow.
Results for this strategy on the real Web of Trust of the Duniter currency Ğ1 are depicted in Figure~\ref{sybil_size_basic}.
As we can see on the figure, in general $5$ users can make a basic Sybil attack of size $\sim 100$.
As it is impossible to differentiate a group of $5$ honest users introducing new honest users or a group of $5$ attackers, it should be normal that every group of $5$ vertices could introduce a fixed number of new vertices.
Only a few groups of $5$ users can reach a basic Sybil attack of size $\sim 1500$.

\begin{figure}[ht!]
    \centering
    \includegraphics[width=0.6\textwidth,page=1]{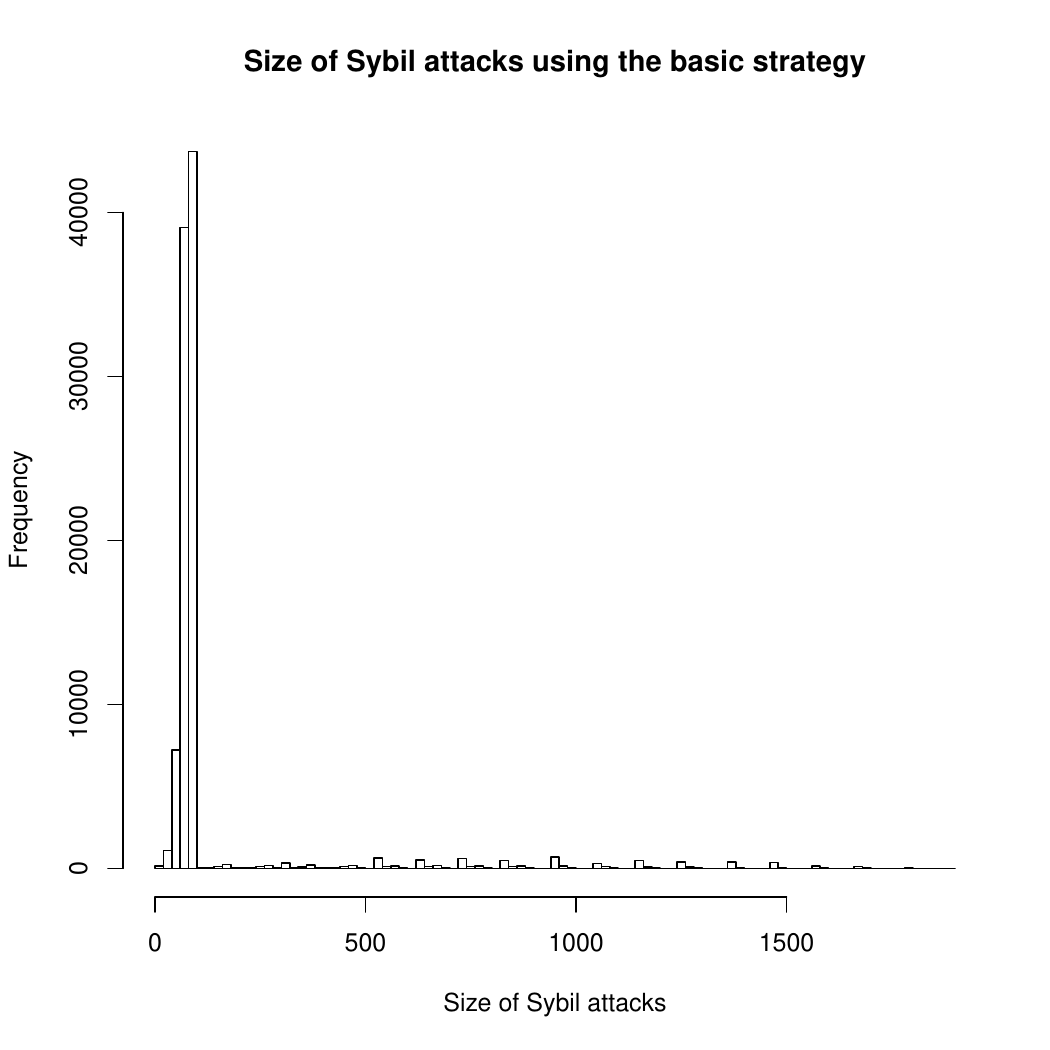}
    \caption{Statistics about the size of basic Sybil attack in the real world Web of Trust of the Duniter-based cryptocurrency Ğ1 which is of size $2590$ in april 2020 for a sample of 100.000 random sets of $5$ attackers. }
    \label{sybil_size_basic}
\end{figure}

It would be interesting to know how the maximum size of a Sybil attacks evolve with the growing of this cryptocurrency.
Furthermore, this study does not take in account that if a user has been part of an operation, then he must wait a certain time to trigger another operation.
Therefore it will slower the growth of a Sybil attack.
It would be interesting to study this speed.
Finally, we recall that  we did not take in account that vertices and arcs could disappear.
This rule should be an advantage for attackers as they may make the honest vertices disappear by creating too much referent fake vertices.

\bibliographystyle{unsrt}
\bibliography{references}

\end{document}